\begin{document}

\title{\texttt{Intention Games} \\ \large Towards Strategic Coexistence between Partially Honest and Blind Players}
\author{Aditya Ahuja}
\institute{Department of Computer Science and Engineering \\ 
  Indian Institute of Technology Delhi \\
  \texttt{aditya.ahuja@cse.iitd.ac.in}}

\maketitle

\begin{abstract}
Strategic interactions between competitive entities are generally considered from the perspective of complete revelation of benefits achieved from those interactions, in the form of public payoff functions and/or beliefs, in the announced games. However, there exist strategic interplays between competitors where the players have a choice to strategise under the availability of \emph{private payoffs}, in similar competitive settings. In this contribution, we propose a formal framework for a competitive ecosystem where each player is permitted to defect from publicly optimal strategies under certain private payoffs greater than announced payoffs, given that these defections have certain acceptable bounds in the long run as agreed by all players. We call this game theoretic construction an Intention Game. We formally define an Intention Game, and notions of participational equilibria that exist in such interactions that permit public defections. We compare Intention Games with conventional strategic form games, and demonstrate a type-theoretic construction of Intention Games. In a partially honest setting, we give Intention Game instances of a Cournot competition, secure interactions between mobile applications, an Internet services' data sourcing competition between Internet service providers through content delivery networks, and a Bitcoin mining competition. We give a use of Intention Games to determine player participation in a cryptographic protocol. Finally, we demonstrate the possibility of a dual model of the Intention Games framework.
\end{abstract}

\section{Introduction}
\label{sec:intro}

Game theory \cite{gt} considers non-cooperative strategic interactions among players with public knowledge of the actions available and payoff structures for the involved parties. Even in the case of games with incomplete information \cite{epis-book}, the framework of the game is consistent with those of complete information games. This framework consists of a direct, disclosed belief \cite{gt-beliefs} on the payoff for each player as a probability distribution on the types of all players, permitting players to strategise under a fair amount of certainty on the underlying game. In every case, all actions are chosen by players individually with selfish interests to maximize their payoff alone. It is this notion of rationality in these non-cooperative settings, that dictates strategic choices and principles of equilibria. \\

\noindent Philosophically, epistemic game theory \cite{epis-book} advocates that players' beliefs of the state of the game, their beliefs about the beliefs of other players, and so on, should be observable. This introduces a fair amount of certainty in the underlying game, in the form of a belief hierarchy. However, it is not yet known how to formally reason in instances of high uncertainty in the true game, when there exists extremely high entropy in the true payoffs of each player, given the players have a choice to reveal only a part of their true payoffs. In such instances, the said belief hierarchy would collapse to a maximum entropy distribution (such as a uniform distribution) on beliefs of any order\footnote{Please see Section \ref{subsec:tig} for details on beliefs under Intention Games and Section \ref{sec:related} for a comparison between epistemic game theory and Intention Games.}, and it is prudent to investigate rational behaviour in such a setting. 

\subsection{The Challenge of Strategic Interactions in Highly Uncertain Games}
\noindent Rational agents my compete amongst themselves with a \emph{partial disclosure of payoffs} achieved out of the competition. There exist computational (and economic) interactions among semi-honest agents where strategic games with partial payoff revelation is a compelling yet unaddressed reality\footnote{We will give concrete examples of partially honest, blind competition in both computational and economic settings, in Section \ref{sec:example}.}. As long as each competing agent is in the knowledge that other participating agents might have payoffs over and above the disclosed payoffs (which is true for the said agent also), it is reasonable to assume applicability of novel game structures in this partially honest setting. Further, game equilibria should dictate that there don't exist players which are being dominantly unfair to other players through excessive defections from the partially disclosed payoffs as agreed upon by all players, which result in losses to the other players. Given this strategic setting, we introduce a new framework to formalize such interactions. \\

\subsection{Our Contribution}
\noindent In order to address the said problem of partially honest competition, we propose a new game theoretic framework called Intention Games. Our framework permits repeated strategic-form games among partially honest players with each player having two categories of payoffs and an \emph{intention} to choose between the two categories. The two categories of payoffs are the publicly declared payoffs, which are less than or equal to private true payoffs, that constitute the second category. Each player also keeps it's intention (choice) to align with either of the two payoffs, private. Moreover, these hidden true payoffs per player might change per iteration of the Intention Game and can potentially result from not just secret, but even spontaneous, random or accidental, contracts/agreements of each player with other players or other hidden parties. We define best responses in this ecosystem permitting public defections and also equilibria of participation as a function of the publicly observed defections in actions for hidden payoff increments. We give four motivating examples, one as an economics' model and three in computational settings, where the Intention Game framework of partially honest behaviour is applicable. We give a use-case of employing Intention Games for participant discovery in a secret sharing protocol. To provide closure to this new game definition, we give a possible dual interpretation of the Intention Game framework. We state that our work is a refinement and extension of \cite{ig-stoc18}. \\

\subsection{Paper Organization}
\noindent This paper is organized as follows. Firstly, we motivate a new game paradigm in a setting of high uncertainty in the true game, in Section \ref{sec:motive}. In Section \ref{sec:def} we formally define an Intention Game, what constitutes a participational equilibrium, when an Intention Game degenerates to a conventional strategic form game, and show how Intention Games can be constructed using types \cite{epis-book} to model payoffs. We then give example settings (one economic model, and three computational scenarios) where Intention Games are an appropriate model for the corresponding strategic interactions, in Section \ref{sec:example}. We also give a use of an Intention Game to define a protocol for determining members of a cryptographic scheme in Section \ref{sec:example}. We give the dual interpretation of Intention Games in Section \ref{sec:selfless}. We compare Intention Games with existing game frameworks in Section \ref{sec:related}. We close the paper in Section \ref{sec:conclude} with concluding arguments and future directions for this work.  
\section{Motivating a New Game for Partially Honest and Blind Players}
\label{sec:motive}

In this section, we motivate the need for a new game structure for a strategic interaction between players in situations of high uncertainty in the true game. We introduce players' behaviour, define rationality in settings of high uncertainty of the game payoffs, and elucidate situations where such interactions are meaningful.

\subsection{Introduction to Semi-Honest and Blind Players}

We introduce a new category of players, that possess both of the following characteristics: 
\vspace*{2pt}
\begin{itemize}
\item[$\rightarrowtail$] \textbf{Partially Honest:} Players are allowed to have a secret choice in their true payoff function, between a publicly announced payoff and a higher private payoff.
\item[$\rightarrowtail$] \textbf{Blind:} Players are not privy to the definition of the private payoff of the other players, or choice made by other players for their true payoff function.
\end{itemize}

\noindent To reiterate the above statement via notation, consider that each player $i$ has a choice between two payoffs $u_i(\cdot)$ and $v_i(\cdot)$, where $u_i(\cdot)$ is publicly announced, and $v_i(\cdot)$ is private and higher than or equal to $u_i(\cdot)$ for all action profiles \cite{gt}, given a set of possible action profiles $\mathbf{A}$ \footnote{It is true that $\forall i, \forall \mathbf{a} \in \mathbf{A}, v_i(\mathbf{a}) \geq u_i(\mathbf{a})$.\label{fn:priv-above}}. The \emph{partially honest} characteristic states that each player $i$'s choice between $u_i(\cdot)$ and $v_i(\cdot)$ is made by $i$ and known only to $i$. The second \emph{blind} characteristic states that for each player $j \hspace*{3pt} (\neq i)$, player $i$ knows $u_j(\cdot)$, but neither knows the definition of $v_j(\cdot)$ nor the choice made by $j$ between $u_j(\cdot)$ and $v_j(\cdot)$. \\

\noindent We do note that the stated characteristics for the players are partially overlapping in their definition.

\subsection{Rationality in Semi-Honest Strategic Interactions by Blind Players}

Strategic interactions between players in classical game theory are governed by the principle of \emph{rationality}: each player participates in the game with the objective to maximize its reward from the interaction, which is achieved by maximizing its payoff function. In classical games, this is easy to achieve owing to complete knowledge and/or belief on the underlying game. However, when it comes to interactions where the true game is highly uncertain owing to private choices by players towards their individual payoff functions, blind players can only make rational choices under their individual knowledge on the underlying game. So for each player, the principle of maximising its payoff remains, but the knowledge of the underlying game (in which actions are taken collectively by all players) does not, with a very high probability.

\subsection{Example Situations for Partially Honest Strategic Interactions by Blind Players}
\label{subsec:motive-ex}

\noindent We now give two classes of examples to bring out partially honest and blind strategic interactions. Given the set of possible action profiles $\mathbf{A}$, we will denote a collective action by all players by $\mathbf{a} \hspace*{3pt} (\in \mathbf{A})$. We will denote, for any player $i$, the publicly announced payoff by $u_i(\cdot)$, and the higher private payoff by $v_i(\cdot)$\footref{fn:priv-above}. \\

\noindent As a first example, consider a conventional game $\mathbb{G}$ in a war where the players are individual sovereign nations who are bound by a treaty, say $H_0$, the actions are armed troop deployments in a particular geographic region at a certain time, and payoffs $u_i(\cdot)$ for each country $i$, are functions giving a numerical representation of the victory in the battle. Now given an action profile $\mathbf{a}$ for all countries following $H_0$, the $\mathbf{a}$ can be used as a certificate by each country $i \in H_0$ as a bargaining chip for troop deployment for alternate treaty/treaties $\mathbf{H} \hspace*{3pt} (\neq H_0)$ between $i$ and other countries under $\mathbf{H}$. Note that $\mathbf{H}$ is independent of $H_0$ and consequently the resulting cumulative payoff $v_i(\cdot)$ for each country $i$ is private with respect to $\mathbb{G}$ and above that achieved in $\mathbb{G}$. We will formalize this example as a Cournot \cite{mixed-olig} Intention Game in Section \ref{subsec:icd}. \\

\noindent As a second suite of examples, consider the following settings in a generic computational framework. In a traditional game $\mathbb{G}$, players are computational entities, like mobile applications \cite{app-android}, Internet service providers (ISPs) \cite{cdn-isp-coll}, and Bitcoin miners \cite{btc-selfish}. The public payoff $u_i(\cdot)$ for each player $i$ is the outcome of conformance to a defined protocol: for mobile applications, this involves guaranteeing a service to the consumer; for ISPs this involves serving a client; and for Bitcoin miners this involves mining on the longest unconfirmed part of the blockchain. The private payoff $v_i(\cdot)$ unannounced in $\mathbb{G}$ comes into play once the player $i$ chooses to depart from standard protocol behaviour for an illegitimate increment in reward: for mobile applications, this requires compromising the security of other competitive applications running on the same operating system; for ISPs, this requires redeploying content delivery networks (CDNs) to improve HTTP object download times; and, for Bitcoin miners, this involves engaging in selfish mining on smaller forks in the blockchain for greater cryptocurrency fees. We will see constructions of this set of examples in Sections \ref{subsec:app-game}, \ref{subsec:cdnisp-ig}, and \ref{subsec:btc-game} respectively. \\

\noindent Given motivating examples of partially honest strategic interactions, we would demonstrate that Intention Games can be used for defining protocols for discovering participants in a secret sharing scheme \cite{pvss}, as in Section \ref{subsec:key-game}. \\

\noindent Note that through our examples, we wish to reiterate that Intention Games is a framework of mutual acceptance of dishonest behaviour among involved players. A dishonest strategy for an optimal private payoff by one player might result in suboptimal payoffs for the other honestly participating players. Therefore, this dishonesty must have bounds as the game evolves. We formally capture this notion, for game participation feasibility, though our equilibria in Section \ref{subsec:eqm}. 
\section{The Intention Games Framework}
\label{sec:def}

In this section we give the detailed construction of the Intention Games framework. We cover basic definitions, principles and details of equilibria, degeneration to classical strategic form games, and a type-theoretic construction of Intention Games according to our model, sequentially.

\subsection{Notation and Preliminaries}
\label{subsec:prelim}

Let $\Delta(V)$ denote the set of all probability distributions on universe $V$. We will use the set notation $[z] := \{ 1,2,3, ..., z \}$. For brevity, we will denote the split of a vector $\mathbf{y}$ on an index $i$ as $\mathbf{y} = (y_i, \mathbf{y}_{-i})$ while implicitly preserving the order of elements. The $-i$ denotes all indices except $i$. We will use $\mathbb{R}$ to denote the set of real numbers, $\mathbb{R}_+$ to denote the set of non-negative real numbers, and $\mathbb{Z}_+$ to denote the set of non-negative integers. \\
Note that we will only give outlines of equilibria computation. The complexity arguments are implied from classical game theory, with the Nash Equilibrium \cite{gt} being \texttt{PPAD}-Complete \cite{agt}.

\subsection{Definitions}
\label{subsec:def}

We first give the formal definition of an Intention Game.

\begin{definition}[Intention Game]
\label{def:ig}
An Intention Game $\mathcal{G}$ is a repeated game given by a tuple $\mathcal{G} = ([n], \{A_i\}_{i \in [n]}, \{(u_i,v_i)\}_{i \in [n]}, \{ \xi_i \}_{i \in [n]})$ where \\
\begin{enumerate}
\vspace*{-10pt}
\item $[n]$ is the set of players.
\item $\forall i \in [n], A_i$ is the set of actions available to player $i$. Also $\mathbf{A} := \times_{i \in [n]} A_i$ is the set of all action profiles.
\item Each player $i \in [n]$ has a constant public payoff function $u_i: \mathbf{A} \rightarrow \mathbb{R}$ and a private payoff function which can change per iteration $v_i: \mathbf{A} \rightarrow \mathbb{R}$. \\ 
It is also the case that for all iterations, $\forall i \in [n], \forall \mathbf{a} \in \mathbf{A}, v_i(\mathbf{a}) \geq u_i(\mathbf{a})$.
\item All players collectively agree on a `public image' of the Intention Game: \\ $\mathbf{Im}(\mathcal{G}) = ([n], \{A_i\}_{i \in [n]}, \{u_i\}_{i \in [n]})$.
\item Each player $i \in [n]$ individually considers it's `self reflection' of the Intention Game: $\mathbf{Ref}_i(\mathcal{G}) = ([n], \{A_i\}_{i \in [n]}, \{ v_i \} \cup \{u_j\}_{j \in [n] \setminus \{ i \}})$.
\item Each player $i \in [n]$ strategises according to either the public image of the Intention Game $\mathbf{Im}(\mathcal{G})$ or it's self reflection of the Intention Game $\mathbf{Ref}_i(\mathcal{G})$, under a private `Intention' (choice) $\xi_i \in \{ \mathbf{Im}(\mathcal{G}), \mathbf{Ref}_i(\mathcal{G}) \}$.
\end{enumerate}
\end{definition}
\vspace*{5pt}

\noindent Note the disparity that is intrinsic to an Intention Game: all players \emph{collectively claim to conform to} $\mathbf{Im}(\mathcal{G})$, but \emph{each player} $i \in [n]$ can optimize according to $\mathbf{Ref}_i(\mathcal{G})$. Each player $i \in [n]$, is \textbf{semi-honest} in its private choice between $\mathbf{Im}(\mathcal{G})$ and $\mathbf{Ref}_i(\mathcal{G})$, and \textbf{blind} towards the choices of all other players $j \in [n] \setminus \{ i \}$ between $\mathbf{Im}(\mathcal{G})$ and $\mathbf{Ref}_j(\mathcal{G})$, under an unknown $v_j(\cdot)$. For a toy 2-player example of an Intention Game, please see Appendix \ref{app:toy}. \\

\noindent \emph{A Note on the Spontaneity and Unpredictability of Private Payoffs in each Iteration: } The Intention Games framework permits the private payoff $v_i(\cdot)$ to change arbitrarily (for each player $i$), in every iteration of the game, as long as $\forall i \in [n], \forall \mathbf{a} \in \mathbf{A}, v_i(\mathbf{a}) \geq u_i(\mathbf{a})$. Further, no player $i$ can guess any $v_j(\cdot), j \neq i$ with a non-negligible \cite{crypto} probability, as there are exponential in $|\mathbf{A}|$ possible definitions of $v_j(\cdot)$ (in the worst case). \\

\noindent Now we give the definition of (per player) action profiles which publicly reveal defection from best responses under the public image $\mathbf{Im}(\mathcal{G})$ of the Intention Game $\mathcal{G}$.
\begin{definition}[Defection Partition of Action Profiles]
\label{def:def-partition}
For each player $i \in [n]$, there exists a partition $(\mathbf{A}^+_i, \mathbf{A}^=_i)$ of $\mathbf{A}$ under $\mathcal{G}$ such that \\
$ \mathbf{A}^+_i := \{ \mathbf{a} \in \mathbf{A} : \exists a^d_i \in A_i, v_i(\mathbf{a}) \ge u_i(a^d_i, \mathbf{a}_{-i}) > u_i(\mathbf{a}) \} $ and $ \mathbf{A}^=_i := \mathbf{A} \setminus \mathbf{A}^+_i $. \\
The partition $(\mathbf{A}^+_i, \mathbf{A}^=_i)$ is called the defection partition of action profiles for player $i$.
\end{definition}
\noindent Note that in Definition \ref{def:def-partition} above, we extend the notion of a partition to permit $\mathbf{A}^+_i$ to be empty. Also, intuitively, $\mathbf{A}^+_i$ is the set of action profiles revealing defection by player $i$, with $a^d_i$ being a witness of defection for action profile $\mathbf{a}$. Since $v_i(\cdot)$ can change per iteration of the Intention Game, so can the the corresponding $\mathbf{A}^+_i$. \\

\noindent We now define a bound on how many defecting players are permissible in each iteration of an Intention Game. For simplicity of the definition, we assume that for non-defecting players, the choice of the private payoff is the same as their public payoff, making the private Intention $\xi$ redundant.
\begin{definition}[$k$-Intention Game]
\label{def:k-ig}
A given Intention Game $\mathcal{G}$ is a $k$-Intention Game if in each iteration of the game there exist at most $k$ players $p^+ \subseteq [n], |p^+| \le k$ such that $\forall i \in p^+, \exists \mathbf{a} \in \mathbf{A}, v_i(\mathbf{a}) > u_i(\mathbf{a})$ and for the remaining players $\forall i \in p^= = [n] \setminus p^+$ it is the case that $\forall \mathbf{a} \in \mathbf{A}, v_i(\mathbf{a}) = u_i(\mathbf{a})$.
\end{definition}
\noindent For each iteration of the Intention Game, we call $p^+$ the set of defecting players, and $p^=$ as the set of non-defecting players. It's an easy verification that $\forall i \in p^+, \mathbf{Ref}_i(\mathcal{G}) \neq  \mathbf{Im}(\mathcal{G})$ and $\forall i \in p^=, \mathbf{Ref}_i(\mathcal{G}) = \mathbf{Im}(\mathcal{G})$. \\ 

\noindent For the rest of the paper, we will consider only $1$-Intention Games for the notion of our best response strategies and equilibria. For our examples, we will specify in each case when we have a $1$-Intention Game or a $k$-Intention Game as our context.

\subsection{Best Responses}

\noindent We now give how best responses are defined in the Intention Games ecosystem. Note that these are just reinterpretations and extensions of the underlying principles of the Nash equilibrium.

\begin{definition}[Best Response Set]
\label{def:br-set}
Given a strategy universe $\Omega$, a payoff $f$, and a complementary strategy profile $\mathbf{t}$, the Best Response Set is given by \\ $\textsc{BR}_f(\mathbf{t}) := \{ \omega \in \Omega: \forall \omega' \in \Omega, f(\omega,\mathbf{t}) \geq f(\omega',\mathbf{t}) \}$.
\end{definition}

\begin{definition}[Best Response Profiles]
\label{def:br-prof}
Given the `public image' and each `self reflection' of the Intention Game, the best response profiles are given by \\
$ \mathbb{BR}(\mathbf{Im}(\mathcal{G})) := \{ \mathbf{a} \in \mathbf{A}: \forall i \in [n], a_i \in \textsc{BR}_{u_i}(\mathbf{a}_{-i}) \} $ \\
$ \mathbb{BR}(\mathbf{Ref}_i(\mathcal{G})) := \{ (b_i, \mathbf{a}_{-i}) \in \mathbf{A}:  \mathbf{a} \in \mathbb{BR}(\mathbf{Im}(\mathcal{G})), b_i \in \textsc{BR}_{v_i}(\mathbf{a}_{-i}) \} \hspace*{5pt} \forall i \in [n] $
\end{definition}

\noindent Consider the following intuition of the best responses as given in Definition \ref{def:br-prof}. The best response profiles in the `public image' of the Intention Game are the traditional set of Nash equilibria. However, for each player $i \in [n]$, in it's `self reflection', for best responses, $i$ assumes everyone else is playing the Nash equilibrium, and then $i$ defects by playing best response under the private payoff $v_i(\cdot)$ (to the others' Nash choice). \\

\noindent Next, we give how to catch public defections, via defection partition set membership, given the players are playing best responses.

\begin{theorem}[Best Response Profile Dependencies]
\label{thm:br-depend}
Given an action profile $\mathbf{a} \in \mathbf{A}$, if for some $i$, $\mathbf{Ref}_i(\mathcal{G}) \neq \mathbf{Im}(\mathcal{G})$ and $\mathbf{a} \in \mathbb{BR}(\mathbf{Ref}_i(\mathcal{G}))$, then \\ 
\begin{enumerate}
\vspace*{-10pt}
\item $\mathbf{a} \in \mathbf{A}^+_i \Leftrightarrow \mathbf{a} \notin \mathbb{BR}(\mathbf{Im}(\mathcal{G}))$.
\item $\mathbf{a} \in \mathbf{A}^=_i \Leftrightarrow \mathbf{a} \in \mathbb{BR}(\mathbf{Im}(\mathcal{G}))$.
\end{enumerate}
\end{theorem}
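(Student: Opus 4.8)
The plan is to prove part~(1); part~(2) is then immediate, since $(\mathbf{A}^+_i,\mathbf{A}^=_i)$ partitions $\mathbf{A}$ and the right-hand sides $\mathbf{a}\notin\mathbb{BR}(\mathbf{Im}(\mathcal{G}))$ and $\mathbf{a}\in\mathbb{BR}(\mathbf{Im}(\mathcal{G}))$ are likewise complementary, so negating both sides of~(1) gives~(2). So fix $\mathbf{a}$ and $i$ with $\mathbf{a}\in\mathbb{BR}(\mathbf{Ref}_i(\mathcal{G}))$. Unwinding Definition~\ref{def:br-prof}, this means $\mathbf{a}=(b_i,\mathbf{c}_{-i})$ for some public best-response profile $\mathbf{c}\in\mathbb{BR}(\mathbf{Im}(\mathcal{G}))$ and some $b_i\in\textsc{BR}_{v_i}(\mathbf{c}_{-i})$; in particular $\mathbf{a}_{-i}=\mathbf{c}_{-i}$, $a_i=b_i$, $a_j=c_j$ for $j\neq i$, and $a_i\in\textsc{BR}_{v_i}(\mathbf{a}_{-i})$, so $v_i(\mathbf{a})=\max_{a_i'\in A_i}v_i(a_i',\mathbf{a}_{-i})$.

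The core of the argument is the equivalence
\[
\mathbf{a}\in\mathbf{A}^+_i \iff a_i\notin\textsc{BR}_{u_i}(\mathbf{a}_{-i}),
\]
which localises everything at player $i$'s coordinate. For the forward direction, any deviation witness $a^d_i$ for $\mathbf{a}$ satisfies $u_i(a^d_i,\mathbf{a}_{-i})>u_i(\mathbf{a})=u_i(a_i,\mathbf{a}_{-i})$, so $a_i$ is not $u_i$-optimal against $\mathbf{a}_{-i}$. For the reverse direction, if $a_i\notin\textsc{BR}_{u_i}(\mathbf{a}_{-i})$ pick any $a^*\in A_i$ with $u_i(a^*,\mathbf{a}_{-i})>u_i(\mathbf{a})$; then, using the pointwise domination $v_i\geq u_i$ from point~3 of Definition~\ref{def:ig} together with $v_i$-optimality of $a_i$, $v_i(\mathbf{a})\geq v_i(a^*,\mathbf{a}_{-i})\geq u_i(a^*,\mathbf{a}_{-i})>u_i(\mathbf{a})$, so $a^*$ itself witnesses $\mathbf{a}\in\mathbf{A}^+_i$. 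This is the one place the hypothesis $\mathbf{a}\in\mathbb{BR}(\mathbf{Ref}_i(\mathcal{G}))$ enters, via $a_i\in\textsc{BR}_{v_i}(\mathbf{a}_{-i})$.

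It remains to identify $a_i\notin\textsc{BR}_{u_i}(\mathbf{a}_{-i})$ with $\mathbf{a}\notin\mathbb{BR}(\mathbf{Im}(\mathcal{G}))$. One direction is trivial: if $a_i$ is not a $u_i$-best response to $\mathbf{a}_{-i}$, then $\mathbf{a}$ violates the defining condition of $\mathbb{BR}(\mathbf{Im}(\mathcal{G}))$ at coordinate $i$. For the converse I would show that the violated coordinate can only be $i$'s: for every $j\neq i$ we have $a_j=c_j\in\textsc{BR}_{u_j}(\mathbf{c}_{-j})$, and since in the $1$-Intention Game setting the non-deviant players strategise against, and stay pinned to, the public image, the configuration each such $j$ faces in $\mathbf{a}$ is the public-equilibrium configuration $\mathbf{c}_{-j}$, so $a_j\in\textsc{BR}_{u_j}(\mathbf{a}_{-j})$. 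Hence $\mathbf{a}\notin\mathbb{BR}(\mathbf{Im}(\mathcal{G}))$ forces $a_i\notin\textsc{BR}_{u_i}(\mathbf{a}_{-i})$, and combining with the core equivalence yields part~(1); part~(2) follows by complementation.

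The main obstacle is exactly this last coordinate-localisation step: replacing $c_i$ by $b_i$ changes the profile the other players see, so $c_j\in\textsc{BR}_{u_j}(\mathbf{c}_{-j})$ does not formally imply $c_j\in\textsc{BR}_{u_j}(\mathbf{a}_{-j})$ in general. I would close this gap either by appealing to the structural reading of $\mathbf{Ref}_i(\mathcal{G})$ just described, or, in the regime where public best responses are unique --- which is the case in the Intention Cournot duopoly of Section~\ref{sec:example} --- by observing that whenever the reverse-direction argument above forces $b_i\in\textsc{BR}_{u_i}(\mathbf{c}_{-i})$ we must have $b_i=c_i$, so that $\mathbf{a}=\mathbf{c}\in\mathbb{BR}(\mathbf{Im}(\mathcal{G}))$ and no other coordinate is disturbed. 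Apart from this, the proof is a direct unwinding of Definitions~\ref{def:ig}--\ref{def:br-prof}.
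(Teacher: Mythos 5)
Your proof follows essentially the same route as the paper's: the forward direction reads off the deviation witness, the reverse direction combines the pointwise domination $v_i \geq u_i$ with the $v_i$-optimality of $a_i$ to promote any profitable public deviation into a witness of membership in $\mathbf{A}^+_i$, and part (2) is obtained by complementation. The coordinate-localisation step you single out as the main obstacle is precisely the point the paper does not argue either: its proof opens by declaring that, since $\mathbf{Im}(\mathcal{G})$ and $\mathbf{Ref}_i(\mathcal{G})$ differ only in player $i$'s payoff, it will ``only consider choices as a function of $u_i$ and $v_i$'', i.e.\ it silently reads $\mathbf{a}\notin\mathbb{BR}(\mathbf{Im}(\mathcal{G}))$ as a failure of optimality at coordinate $i$, ignoring the possibility that replacing $c_i$ by $b_i$ breaks some other player $j$'s best-response condition. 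So your attempt is, if anything, more scrupulous than the paper's own argument: your uniqueness-of-public-best-response patch proves the claim under an added hypothesis (which does hold in the Cournot example of Section \ref{sec:example}), whereas in full generality the statement requires the localised reading that both you and the paper ultimately appeal to.
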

\begin{proof}
Since for each player $i$, the best response choices between $\mathbf{Im}(\mathcal{G})$ and $\mathbf{Ref}_i(\mathcal{G})$ only differ in the payoff of player $i$, our proof will only consider choices as a function of $u_i(\cdot)$ and $v_i(\cdot)$. \\ 
\textsc{Proving $1 \Rightarrow$}. Let's say $\mathbf{a} \in \mathbf{A}^+_i$. Then there exists a witness of defection $a^d_i$ such that $u_i(a^d_i, \mathbf{a}_{-i}) > u_i(a_i, \mathbf{a}_{-i})$. So $a_i$ is not a best response under payoff $u_i(\cdot)$ given the complementary action profile $\mathbf{a}_{-i}$. Thus $a_i \notin \textsc{BR}_{u_i}(\mathbf{a}_{-i})$ and $\mathbf{a} \notin \mathbb{BR}(\mathbf{Im}(\mathcal{G}))$. \\
\textsc{Proving $1 \Leftarrow$}. Let's say $a_i$ is not a member of the best response set under payoff $u_i(\cdot)$ given the complementary action profile $\mathbf{a}_{-i}$. Then there exists an $a^d_i \in \textsc{BR}_{u_i}(\mathbf{a}_{-i})$ such that $u_i(a^d_i, \mathbf{a}_{-i}) > u_i(a_i, \mathbf{a}_{-i})$. Also since player $i$ is playing best responses under payoff $v_i(\cdot)$ given complementary action profile $\mathbf{a}_{-i}$ it is true that $v_i(a_i, \mathbf{a}_{-i}) \ge v_i(a^d_i, \mathbf{a}_{-i}) \ge u_i(a^d_i, \mathbf{a}_{-i})$ (the second part of the inequality is true as by definition, for any player, $v_i(\cdot)$ is always greater than or equal to $u_i(\cdot)$ under the same action profile, in this case $(a^d_i, \mathbf{a}_{-i})$). These inequalities imply that $v_i(a_i, \mathbf{a}_{-i}) \ge u_i(a^d_i, \mathbf{a}_{-i}) > u_i(a_i, \mathbf{a}_{-i})$, and consequently $(a_i, \mathbf{a}_{-i}) = \mathbf{a} \in \mathbf{A}^+_i$. \\
\textsc{Proving $2$}. This statement is the equivalence complement (for propositions $\rho_1$ and $\rho_2$, $\rho_1 \Leftrightarrow \rho_2$ if and only if $\neg{\rho_1} \Leftrightarrow \neg{\rho_2}$) of statement $1$, which has been proved.
\end{proof}

\noindent We also give the implication of the dishonest player's actions on the honest players.

\begin{corollary}[Fallout for Honest Players]
\label{cor:fallout}
Given for some dishonest player $i$, $\mathbf{a} \in \mathbb{BR}(\mathbf{Ref}_i(\mathcal{G}))$ and $\mathbf{a} \in \mathbf{A}^+_i$, then $\mathbf{a}$ is a suboptimal payoff action profile for all honest players $j (\neq i)$, as $j$ plays as per $\mathbf{Im}(\mathcal{G})$, but $\mathbf{a} \notin \mathbb{BR}(\mathbf{Im}(\mathcal{G}))$.
\end{corollary}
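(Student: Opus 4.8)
The plan is to derive the corollary directly from Theorem~\ref{thm:br-depend} together with the characterisation of deviant and non-deviant players in a $1$-Intention Game. First I would invoke Theorem~\ref{thm:br-depend}, part~1: since we are given that $\mathbf{a} \in \mathbb{BR}(\mathbf{Ref}_i(\mathcal{G}))$ and $\mathbf{a} \in \mathbf{A}^+_i$, the forward direction of statement~1 immediately yields $\mathbf{a} \notin \mathbb{BR}(\mathbf{Im}(\mathcal{G}))$. This already says that $\mathbf{a}$ is not a best response profile for the public image, so in particular $a_i$ is not the only action whose optimality fails; the point is to transfer this to the honest players.

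Next I would use the remark following Definition~\ref{def:k-ig}: in a $1$-Intention Game, player $i$ being the (unique) deviant player means $i \in l^+$, and every other player $j \neq i$ lies in $l^=$, so $\mathbf{Ref}_j(\mathcal{G}) = \mathbf{Im}(\mathcal{G})$. Hence $\mathbb{BR}(\mathbf{Ref}_j(\mathcal{G})) = \mathbb{BR}(\mathbf{Im}(\mathcal{G}))$ for every honest $j$, and combining with the previous paragraph gives $\mathbf{a} \notin \mathbb{BR}(\mathbf{Ref}_j(\mathcal{G}))$. Finally I would translate ``$\mathbf{a} \notin \mathbb{BR}(\mathbf{Im}(\mathcal{G}))$'' into the claimed suboptimality: by Definition~\ref{def:br-prof}, failing to be a best response profile means there is some player whose action is not in its best response set, and since the only coordinate that changed relative to the equilibrium witness is $a_i$, it is exactly the honest players $j$ whose prescribed payoff $u_j(\mathbf{a})$ is dominated by $u_j(a_i', \mathbf{a}_{-i})$ for the publicly optimal $a_i'$ — so $\mathbf{a}$ is a suboptimal payoff action profile for each honest $j$.

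The argument is essentially a chain of substitutions, so I do not expect a genuine obstacle; the one place that needs care is making precise the phrase ``suboptimal payoff action profile for all honest players $j$''. A literal reading of Definition~\ref{def:br-prof} only tells us that \emph{some} player's action is not a best response in $\mathbf{a}$; to conclude it is specifically each honest $j$ that suffers, I would lean on the structure of $\mathbb{BR}(\mathbf{Ref}_i(\mathcal{G}))$, namely that $\mathbf{a} = (b_i, \mathbf{a}_{-i})$ with $\mathbf{a}_{-i}$ inherited from a genuine public best-response profile, so that every honest coordinate is already optimal against $\mathbf{a}_{-i}$ in the public game and the suboptimality manifests as a loss the honest players incur because $i$ deviated to $b_i$. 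I would state this interpretive clarification explicitly in one sentence before closing the proof, so the corollary's wording is unambiguous.
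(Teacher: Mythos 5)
Your first two steps are exactly the paper's (implicit) justification: the paper offers no separate proof of the corollary, since Theorem~\ref{thm:br-depend}(1) gives $\mathbf{a} \notin \mathbb{BR}(\mathbf{Im}(\mathcal{G}))$, and the remark after Definition~\ref{def:k-ig} gives $\mathbf{Ref}_j(\mathcal{G}) = \mathbf{Im}(\mathcal{G})$ for every non-deviant $j$, hence $\mathbf{a} \notin \mathbb{BR}(\mathbf{Ref}_j(\mathcal{G}))$; the corollary's ``suboptimal payoff action profile'' is meant precisely and only in this sense of non-membership, so at that point you are done. Your third step, however, overreaches: from Definition~\ref{def:br-prof} the coordinate whose public best-response property fails in $\mathbf{a} = (b_i, \mathbf{a}_{-i})$ is player $i$'s own (that is exactly what the proof of Theorem~\ref{thm:br-depend}(1) exhibits via the witness $a^d_i$), not the honest players', and the claim that each honest $j$'s payoff $u_j(\mathbf{a})$ is dominated by $u_j(a^d_i, \mathbf{a}_{-i})$ does not follow from the definitions --- nothing in the framework forbids a deviation by $i$ from raising some $u_j$. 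The Cournot duopoly in Section~\ref{sec:example} does exhibit such a monetary loss ($u_{-i}(q^{v*}_i, q^{u*}_{-i}) < u_{-i}(q^{u*}_i, q^{u*}_{-i})$), but that is a feature of that example, not a general consequence. So keep your first two paragraphs, and replace the interpretive clarification with the observation that the corollary defines suboptimality for honest players as exclusion from their best-response profile set, rather than asserting a pointwise payoff comparison.
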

\noindent Note that in Corollary \ref{cor:fallout} as $\mathbf{a} \in \mathbf{A}^+_i$, there exists a defection witness $a^d_i \in A_i$  that maximizes $u_i(a^d_i, \mathbf{a}_{-i})$, corresponding to the Nash optimal strategy $(a^d_i, \mathbf{a}_{-i})$ under $\mathbf{Im}(\mathcal{G})$. Further, any player $j (\neq i)$ cannot play best responses under $u_j(\cdot)$ as it (mistakenly) assumes the complementary action profile (see Definition 4) to be $(a^d_i, \mathbf{a}_{-\{i,j\}})$ instead of $(a_i, \mathbf{a}_{-\{i,j\}})$, given that the latter one is the one being played. \\

\noindent Given the framework of an Intention Game, we now define how players can decide on coexistence through appropriate equilibria that reflect the number of defections and degrees of defections (through numerical measures on payoffs), in the next subsection.

\subsection{Equilibria in Intention Games}
\label{subsec:eqm}

We first define a participation equilibrium which is captures how many cumulative instances\footnote{In future, we would like to consider an equilibrium definition capturing defection centrality: is there a subset of players defecting disproportionately as compared to other players?} of publicly observed defections from $\mathbf{Im}(\mathcal{G})$ are seen by all players upto the current run of the Intention Game. 

\begin{definition}[Honesty Equilibrium]
\label{eqm:honesty}
A pure-strategy profile vector $(\dot{\mathbf{s}}^t)_{t \in [\tau]} \in \mathbf{A}^\tau$ is a $(\tau,\delta)$-Honesty Equilibrium if after $\tau$ iterations of the Intention Game, \\
given that $\forall t \in [\tau], \forall i \in [n], \dot{\mathbf{s}}^t \in \mathbb{BR}(\mathbf{Im}(\mathcal{G})) \text{ or } \dot{\mathbf{s}}^t \in \mathbb{BR}(\mathbf{Ref}_i(\mathcal{G})) $, \\
it is the case that  $| \{ \dot{\mathbf{s}}^t : t \in [\tau], \dot{\mathbf{s}}^t \notin \mathbb{BR}(\mathbf{Im}(\mathcal{G})) \} | = \delta$.
\end{definition}

\subsubsection*{Computation.} We assume that the computation of a pure-strategy Nash equilibrium for $\mathbf{Im}(\mathcal{G})$ is a given. We give the method for computing the Honesty equilibrium, as an invariant under $t \in [\tau]$. Let's say $\delta_{t - 1}$ is the Honesty equilibrium bound upto epoch $t-1$. Now given $\dot{\mathbf{s}}^t$, compute public defection, using Theorem \ref{thm:br-depend}, by testing membership of $\dot{\mathbf{s}}^t$ in $\mathbf{A}^+_i$ for each $i$. Note that since each player is playing best responses under it's private payoff, this membership can be tested by only finding an $s^d_i \in A_i$ such that $u_i(s^d_i, \dot{\mathbf{s}}^t_{-i}) > u_i(\dot{s}^t_i, \dot{\mathbf{s}}^t_{-i})$. If there exists a single player for which $\dot{\mathbf{s}}^t$ is (publicly) defecting, set $\delta_t = \delta_{t - 1} + 1$. Otherwise set $\delta_t = \delta_{t - 1}$. \\

\noindent We now give a mixed-strategy participation equilibrium for a $1$-Intention Game where the defecting player $i^+$ persists with an unchanging higher payoff $v_{i^+}(\cdot)$ for polynomially (in $n$) many rounds. For each of those rounds, players $[n] \setminus \{i^+\}$ are non-defecting.

\begin{definition}[Defection Equilibrium]
\label{eqm:defection}
A mixed-strategy profile vector \\ $(\dot{\mathbf{s}}_i)_{i \in [n]} \in \Delta^n(\mathbf{A})$, where $\dot{\mathbf{s}}_i$ is a mixed-strategy best response under $\mathbf{Ref}_i(\mathcal{G})$, is a $\mu$-Defection Equilibrium if $\exists i \in [n],$ 
$\mathbf{E}[v_i(\dot{\mathbf{s}}_i) - u_i(\dot{\mathbf{s}}_i)] \ge \mu$ under $\mathbf{Im}(\mathcal{G})$.
\end{definition}

\subsubsection*{Computation.} We assume that the computation of a mixed-strategy equilibrium $\dot{\mathbf{s}}_i$ for $\mathbf{Ref}_i(\mathcal{G})$ is a given: first we compute the mixed-strategy Nash equilibrium under $\mathbf{Im}(\mathcal{G})$ and then replace the $i$th player's (randomized) Nash optimal with the randomized best response under $v_i(\cdot)$ (by keeping the Nash optimal constant for all $j \neq i$). \\
We first give the method by which a player $i$ can compute his own defection bound $\mu_i$. Given the distribution $\dot{\mathbf{s}}_i$ computed in the previous step, it is straightforward to compute the distribution $v_i(\dot{\mathbf{s}}_i) - u_i(\dot{\mathbf{s}}_i)$, if the functions $u_i(\cdot),v_i(\cdot)$ are deterministic and efficient (polynomial time in $|\mathbf{A}|$ computable). So, we can determine $\mu_i$ as the expected value of $v_i(\dot{\mathbf{s}}_i) - u_i(\dot{\mathbf{s}}_i)$. \\ 
We now give the method whereby a player $i \hspace*{3pt} (\neq j)$ can compute a lower bound for $\mathbf{E}[v_j(\dot{\mathbf{s}}_j) - u_j(\dot{\mathbf{s}}_j)]$ given a sufficiently long stream $\tau$ of realizations of $\dot{\mathbf{s}}_j$. For an arbitrary iteration $t$ of the Intention Game, let $\mathbf{r}^t$ be the realization of the strategies played by all players. If $\mathbf{r}^t \in \mathbf{A}^+_j$, find an $a^{d*}_j \in A_j$ such that $c^t_j := u_j(a^{d*}_j, \mathbf{r}^t_{-j}) - u_j(\mathbf{r}^t)$ is maximized. If $\mathbf{r}^t \in \mathbf{A}^=_j$, $c^t_j := 0$. By the law of large numbers, $\mu_j := \frac{\sum_{t \in [\tau]} c^t_j}{\tau}$. \\ 
Finally, $\mu := \texttt{max}_{k \in [n]} \hspace*{5pt} \mu_k$. 

\subsubsection*{Discussion.} It is clear from the definition of both the Honesty and Defection equilibria, the game is more fair as long as $\delta$ and $\mu$ are small. So these equilibria definitions can be used by each player to announce the terms of competition. For instance, players might agree on a $(\tau,\delta)$-Honesty equilibrium conforming game as long as $\delta \leq \delta_0, \forall \tau$, for some contractual constant $\delta_0$. As another case, players might agree on a $\mu$-Defection equilibrium conforming game as long as $\mu \leq \mu_0$ for some previously announced constant $\mu_0$. Whenever $\delta_0, \mu_0$ are exceeded, players terminate the Intention Game.

\subsection{Comparison with Conventional Strategic Form Games}
\label{subsec:normal-degen}

There is an instance when an Intention Game is identical to a conventional (underlying) strategic form game. Consider the case where for all iterations of the Intention Game $\mathcal{G}$, $\forall \mathbf{a} \in \mathbf{A}, v_i(\mathbf{a}) = u_i(\mathbf{a})$, for all players $i$. In this case, for all iterations of $\mathcal{G}$,
$\forall i \in [n], \mathbf{Ref}_i(\mathcal{G}) = \mathbf{Im}(\mathcal{G})$. Further, the Nash equilibrium will hold per iteration of $\mathcal{G}$ and any evolution of the Intention Game would result in $(\tau, 0)$-Honesty and $0$-Defection equilibria. This can be intuitively seen from the fact that for all players $i \in [n]$, in any evolution of the Intention Game, the set of public defection action profiles $\mathbf{A}^+_i$ will always be empty. \\

\subsection{Uncaught Defection and Equilibrium Match under certain Private Payoffs}
\label{subsec:equi-degen}

There can be instances where the best responses for an Intention Game are identical to those of the underlying public strategic form game, even when the two games are different. We give the function family $\{ (u_i,v_i) : \forall \mathbf{a} \in \mathbf{A}, v_i(\mathbf{a}) = c \times u_i(\mathbf{a}), c \in (1, \infty) \}$ for every player $i \in [n]$ in the Intention Game $\mathcal{G}$. Now it is an easy verification that although $\forall i \in [n], \mathbf{Ref}_i(\mathcal{G}) \neq \mathbf{Im}(\mathcal{G})$ (as $\forall \mathbf{a} \in \mathbf{A}, v_i(\mathbf{a}) > u_i(\mathbf{a})$), we have $\mathbb{BR}(\mathbf{Ref}_i(\mathcal{G})) = \mathbb{BR}(\mathbf{Im}(\mathcal{G}))$. Here, the defecting player $i$ would never be caught, as $\mathbf{A}^+_i$ is empty, and computing the Nash equilibrium for $\mathbf{Im}(\mathcal{G})$ would suffice. Consequently, any evolution of the Intention Game would result in $(\tau, 0)$-Honesty and $0$-Defection equilibria. \\

\subsection{A Type-theoretic Construction of Intention Games}
\label{subsec:tig}

We now give an alternate construction of Intention Games, by introducing types \cite{bgi} to capture the space of possible payoff functions, both public and private, corresponding to each player. Consistent with the type theoretic model in Bayesian games \cite{bgi}, we also introduce an entity `Nature', that signals the type for each player, corresponding to the payoff that Nature wishes for the corresponding player. We assume that each signal by Nature is \emph{superior}: for each player, the payoff corresponding to the signal is higher than or equal to a given public default payoff, as is captured formally in the following definition.

\begin{definition}[Set of Superior Type Vectors]
\label{def:suptypes}
Given, for each player $i \in [n]$, a set of types $\Theta_i$, a default type $\theta^0_i \in \Theta_i$, and a payoff function $w_i: \Theta_i \times \mathbf{A} \rightarrow \mathbb{R}$, under type vector set $\mathbf{\Theta} := \times_{i \in [n]} \Theta_i$ and the default vector $\mathbf{\theta}^0 = (\theta^0_1, \theta^0_2, ..., \theta^0_n)$, the set of superior type vectors is given by $\texttt{Sup}_{\theta^0}(\mathbf{\Theta}) = \{ \mathbf{\theta} := (\theta_1, \theta_2, ..., \theta_n) \in \mathbf{\Theta} : \forall i \in [n], \forall \mathbf{a} \in \mathbf{A}, w_i(\theta_i,\mathbf{a}) \geq w_i(\theta^0_i,\mathbf{a}) \}$.
\end{definition}

\noindent We give the formal definition of a Type-theoretic Intention Game, which is an equivalent construction of the Intention Game given in Definition 1. In the following definition, for every player's payoffs (both public and private), we assume the pre-image and image spaces of the payoff are finite. More specifically, we assume each payoff maps action profiles to a (perhaps large) finite field $\mathbb{F}_q := \{ 0, 1, 2, ..., q-1 \}$. Each player has the same set of types, spanning all possible functions from $\mathbf{A}$ to $\mathbb{F}_q$. Also, we will use $\mathcal{U}(V)$ to denote uniform distribution over universe (set) $V$. \\

\begin{definition}[Type-theoretic Intention Game]
\label{def:ig}
The Type-theoretic Intention Game $\mathcal{TIG}$ is a repeated game given by \\ $\mathcal{TIG} = ([n], \{A_i\}_{i \in [n]}, \{\Theta_i\}_{i \in [n]}, \{w_i\}_{i \in [n]}, \{\theta^0_i\}_{i \in [n]}, \{ \xi_i \}_{i \in [n]})$, where: \\
\begin{enumerate}
\vspace*{-10pt}
\item $[n]$ is the set of players.
\item $\forall i \in [n], A_i$ is the (finite) set of actions available to player $i$. Also $\mathbf{A} := \times_{i \in [n]} A_i$ is the set of all action profiles.
\item $\forall i \in [n], \Theta_i := [q^{|\mathbf{A}|}]$ is the set of types for player $i$. Also $\mathbf{\Theta} := \times_{i \in [n]} \Theta_i$ is the set of all type vectors.
\item Each player $i \in [n]$ has a payoff function $w_i: \Theta_i \times \mathbf{A} \rightarrow \mathbb{F}_q$, where: \\ given $\forall i \in [n], \theta^0_i \in \Theta_i$ all players agree on $\mathbf{Im}(\mathcal{TIG}) := \{ w_i(\theta^0_i, \cdot ) \}_{i \in [n]}$.
\item Nature generates, for each player $i \in [n]$, a private, one-time, random permutation $\eta_i: \Theta_i \rightarrow \Theta_i$ that changes in each iteration. For each player $i \in [n]$, Nature reveals $\eta_i$ (to $i$ alone).
\item Given a superior signal $\mathbf{\theta}^\mathcal{N} = (\theta^\mathcal{N}_1, \theta^\mathcal{N}_2, ..., \theta^\mathcal{N}_n) \in \texttt{Sup}_{\theta^0}(\mathbf{\Theta})$ from Nature, Nature generates  $\tilde \theta = (\tilde \theta_1, \tilde \theta_2, ..., \tilde \theta_n) \in \mathbf{\Theta}$ such that $\forall i, \eta_i(\tilde \theta_i) = \theta^\mathcal{N}_i$, and players see $\tilde \theta$.
\item $\forall i \in [n]$, player $i$ recovers $\theta^\mathcal{N}_i$ (alone) from $\tilde \theta$ using $\eta_i$.
\item $\forall i \in [n]$, the private Intention of player $i$ to choose the game, is given by $\xi_i \in \{ w_i(\theta^\mathcal{N}_i, \cdot ), w_i(\theta^0_i, \cdot ) \}$ while fixing $\{ w_j(\theta^0_j, \cdot ) \}_{j \in [n] \setminus \{i\} }$
\end{enumerate}
\end{definition}

\noindent \textbf{Discussion.} Given a private, random permutation on types $\eta_i$, for each player $i \in [n]$, for any distribution $\mathbf{p}_{\texttt{Sup}_{\theta^0}(\mathbf{\Theta})}$ of choices made by Nature, each player $i$ sees the uniform distribution $\mathcal{U}(\texttt{Sup}_{\theta^0_{-i}}(\mathbf{\Theta}_{-i}))$ on the possible private payoffs of other players. It is easy to see that the worst case of zero default payoffs\footnote{Consider the case $\forall i \in [n], \forall \mathbf{a} \in \mathbf{A}, w_i(\theta^0_i,\mathbf{a}) := 0$.}, the set of superior types for all competitors of $i$ is $\texttt{Sup}_{\theta^0_{-i}}(\mathbf{\Theta}_{-i}) = \mathbf{\Theta}_{-i}$ (with size $q^{(n - 1)|\mathbf{A}|}$). Consequently, in the worst case, player $i$ would consider the uniform distribution on the superior payoffs of its competitors $\mathcal{U}(\{ w_j \}_{j \in [n] \setminus \{ i \}})$, resulting in the uniform distribution on the possible strategies by its competitors: $\mathcal{U}(\mathbf{A}_{-i})$, which is a maximum entropy first-order belief \cite{epis-book}. By a similar reasoning, the second order belief of any player $i$ is the uniform distribution (due to max entropy on Nature's signal towards the competitors) on possible actions of its competitors $-i$, and the possible beliefs (which is again only the uniform distribution by our construction) of competitors: $\mathcal{U}(\mathbf{A}_{-i} \times \mathcal{U}(A_i))$. In the worst case, it can be inductively shown that the belief hierarchy \cite{epis-book} collapses due to max-entropy on beliefs of any order: the \emph{expected distribution} of the belief of any order is uniform, implied by uniform distribution on the superior types for each competitor.
 
\subsection*{Collusions for Higher Payoffs}
Given an arbitrary $k$-Intention Game, since the players are both partially honest and blind, it is a strong possibility that upto $k$ players conspire to define each of their private payoffs, and then defect collectively from the public image of the Intention Game for better payoffs. We will see a concrete realization of this possibility in the Bitcoin mining competition (Section \ref{subsec:btc-game}).

\subsection*{Equality Gap and Social Welfare}
The participation equilibria in Section \ref{subsec:eqm} define within themselves what constitutes acceptable bounds on defections through the measures $\delta$ and $\mu$. We state that there can be more measures introduced to ensure social welfare, such as measures to capture any \emph{centrality} in defections: is there a subset of players who are being disproportionately dishonest as compared to the other players? Also, can there exist metrics as a function of chosen true payoffs (between public and private) to formally define social welfare? We state that such formalizations are beyond the scope of this introductory work on Intention Games.

\section{Example Constructions and a Use of Intention Games}
\label{sec:example}

We now give two classes of example constructions of Intention Games. The first class contains an economic model of competition that is an extension of a Cournot duopoly \cite{gt}. The second class contains three skeletal examples from different computational settings, including a competition for secure interaction between mobile applications, a competition between ISPs for sourcing data from CDNs, and a strategic interaction between Bitcoin miners having a choice between traditional and selfish mining. We also give a use case of Intention Games for a protocol to determine participation in a secret sharing scheme. \\
In each example/use, for each player $i \in [n]$, given the set of action profiles $\mathbf{A}$, we assume $u_i: \mathbf{A} \rightarrow \mathbb{R}_+$ is the public payoff function, and $v_i: \mathbf{A} \rightarrow \mathbb{R}_+$ is the private payoff function. Also, we suppress the Intention set $\{ \xi_i \}_{i \in [n]}$ in each Intention Game definition and informally argue how a choice between the public image and self reflection of the Intention Game affects the players. \\

\subsection{An Intention Cournot Duopoly}
\label{subsec:icd}

We extend a conventional Cournot duopoly where two firms compete in the supply of a single homogenous commodity to a single market with identical cost functions and symmetric payoffs \cite{mixed-olig}. Our extended `Intention Cournot Duopoly' (ICD) involves a secret contract with a preferential but hidden second market which is in contact with at most one firm at any point in time. Both firms have a symmetric secret contract with this hidden preferential market which defines a (higher than public payoff) private payoff in the Intention Game whenever the corresponding firm participates in the contract (given by its Intention $\xi$) in the event of a contact with the hidden market (see Figure 1). \\

\tikzstyle{Firm}=[circle,draw=blue,fill=blue!20,inner sep=2.5mm]
\tikzstyle{Mar}=[rectangle,draw=red,fill=red!20, text width=7em, inner sep=2.6mm]
\tikzstyle{HMar}=[cloud, draw=orange,cloud puffs=10,cloud puff arc=120, aspect=2, text width=7em, inner ysep=1em]

\begin{figure}
\label{fig:icd}
\centering
\begin{tikzpicture}[scale=0.5][domain=0:8]

\draw (1,4) node[Firm,label=center:$1$,label=left:$v_1$] (f_1) {};
\draw (9,4) node[Firm,label=center:$2$,label=right:$v_2$] (f_2) {};
\draw (5,0) node[Mar,label=center:\text{Market \textbf{M}}, label=below:\text{$(u_1,u_2)$}] (m_1) {};
\draw (5,8) node[HMar,label=center:\text{Hidden Market \textbf{H}}] (m_2) {};

\draw (f_1)  to node [label=left:$q_1$] {} (m_1);
\draw (f_2)  to node [label=right:$q_2$] {} (m_1);
\draw[dotted] (f_1)  to node [label=left:$H_1$] {} (m_2);
\draw[dotted] (f_2)  to node [label=right:$H_2$] {} (m_2);

\end{tikzpicture}
\caption{An Intention Cournot Duopoly between Firms $1$ and $2$, given a secret contract $(H_1,H_2)$ with a Hidden Market.}
\end{figure}

\noindent More formally, given a market \textbf{M} and a hidden preferential market \textbf{H} with a secret contract $(H_1,H_2)$, 
$\mathcal{C} := ([2], \{A_1,A_2\}, \{ (u_1,v_1), (u_2,v_2) \})$ is a Intention Cournot Duopoly where \\
\vspace*{-7pt}
\begin{enumerate}
\item $A_i := \{ q_i: q_i \in [0, 1] \}$, \hspace*{3pt} $\forall i \in [2]$.
\item $u_i(q_1,q_2) := q_iP(q_1,q_2) - C(q_i) := q_i(1 - q_1 - q_2) - \frac{1}{2}q_i^2$, \hspace*{3pt} $\forall i \in [2]$ (payoff from \textbf{M}).
\item $H_i:$ If \textbf{H} is in contact with firm $i$, it will take the same supply $q_i$ as to \textbf{M}, compensate production costs, and pay half of the supply from $i$ to it. Otherwise no trade. \hspace*{5pt} ($\forall i \in [2]$) \\
$ h_i(q_1,q_2) := \frac{1}{2}q_i $ or $0$ under contract $H_i$, $\hspace*{3pt} \forall i \in [2]$. 
\item $v_i(q_1,q_2) := u_i(q_1,q_2) + h_i(q_1,q_2)  \hspace*{3pt} \forall i \in [2]$ \\
\end{enumerate}

\noindent Note that the price $P$ and cost $C$ functions are expanded inline in point $3$. Also $-i$ is a single player $3 - i$. 

\subsubsection{Honesty and Defection Equilibria.}
Using superscripts $u*$ and $v*$ for best responses under public and private payoffs respectively, it is an easy calculation that the best response profile sets are singletons: \\
$\mathbb{BR}(\mathbf{Im}(\mathcal{C})) = (q^{u*}_1 = \frac{1}{4}, q^{u*}_2 = \frac{1}{4})$ and \\
$\mathbb{BR}(\mathbf{Ref}_i(\mathcal{C})) = (q^{v*}_i = \frac{5}{12}, q^{u*}_{-i} = \frac{1}{4})$ whenever $H_i$ is applicable and succeeds. \\

\noindent Note that $q^d_i = \frac{1}{4}$ is a (maximal) defection witness of $(q^{v*}_i = \frac{5}{12}, q^{u*}_{-i} = \frac{1}{4})$. \\

\noindent According to our statement of the ICD, at any iteration of $\mathcal{C}$, $H_i$ can succeed for at most one $i$, so $\mathcal{C}$ is a $1$-Intention Game. Thus, the $\delta$ factor of an Honesty Equilibrium can increment by $1$ in an iteration of the ICD if for the said iteration $H_i$ succeeds and consequently $\mathbf{Im}(\mathcal{C}) \neq \mathbf{Ref}_i(\mathcal{C})$. \\

\noindent Also $\forall i \in [2], \mu_i := u_i(q^d_i, q^{u*}_{-i}) - u_i(q^{v*}_i, q^{u*}_{-i}) = 0.09375 - 0.05208 = 0.04167$. So given best responses under $\mathbf{Im}(\mathcal{C})$ and $\mathbf{Ref}_i(\mathcal{C})$, there exists a $(\mu = 0.04167)$-Defection Equilibrium. \\

\noindent Finally, note that a defection by player $i$ costs player $-i$: $u_{-i}(q^{v*}_i, q^{u*}_{-i}) = 0.05208 < 0.09375 = u_{-i}(q^{u*}_i, q^{u*}_{-i})$. \\

\noindent \textbf{Correspondence between $\mathcal{C}$ and Competition between Countries at War.} We now give a correspondence between an Intention Cournot Duopoly $\mathcal{C}$ and two enemy nations engaged in a war. The players in $\mathcal{C}$ correspond to two nations, the actions in $\mathcal{C}$ represent quantity of troop deployment in hostile territory, and the payoffs in $\mathcal{C}$ represent rewards out of the battle, given a treaty \textbf{M} \footnote{We referred to a treaty \textbf{M} as $H_0$ in Section \ref{subsec:motive-ex}.}. The cost function $C$ defines the cost of troop deployment, and the price function $P$ gives the earnings in battle (such as territory captured), per unit of troop deployment. The hidden market \textbf{H} represents alternate treaties each nation (player) might have with other allied nations (not part of $\mathcal{C}$) for troop support in a nearby conflict zone, close to the deployment zone as in \textbf{M}. Thus we have an Intention Game model of strategic wartime interactions between enemy and allied nations. \\

\subsection{An Intention Game for Secure Interactions between Mobile Applications}
\label{subsec:app-game}
\noindent Mobile applications on the Android operating system use a message passing system based on \emph{intents} to communicate \cite{app-android}. These intents are delivered to application components, which include user interface \emph{activities}, background \emph{services}, and \emph{broadcast receivers} \cite{app-android}. There exist well studied legitimate and illegitimate uses of intents and application components to define application (player) behaviour. We list the possible legitimate and illegitimate uses of intents and application components in Table 1. \\

\begin{table}[ht]
\centering
\caption{Possible Legitimate/Malicious Actions by Android based Mobile Applications \cite{app-android}}
\begin{tabular}{|c||c|c|}
\hline
Game Notation & Action Type & Action Description \\
\hline
\hline
\textsc{userexp} & Optimize User Experience & Legitimate behaviour to serve the user. \\
\textsc{btheft} &Broadcast Theft&Broadcast evesdropping and denial-of-service.\\
\textsc{ahijack} &Activity Hijacking&A malicious activity is launched.\\
\textsc{shijack} &Service Hijacking&A malicious service intercepts intent(s) for a legit service.\\
\textsc{binject}&Broadcast Injection&Malicious receivers are vulnerable to malicious broadcasts.\\
\textsc{b-nocheck}&System Broadcast w/o Action Check&Attacks leveraging unchecked action field in intents.\\
\textsc{alaunch}&Activity Launch&Applications can launch malicious activities using intents. \\
\textsc{slaunch}&Service Launch&Applications can maliciously launch unprotected services. \\
\hline
\end{tabular}
\end{table}

\noindent We construct a $k$-Intention Game $\mathcal{G}_{app}$ to formally capture the legitimate/illegitimate interaction between mobile applications. The actions $A_i$ capture the traditional / malicious actions of the mobile applications. The best possible reward for legitimate behaviour by application $i$ is given by $c_{\tau,i}$. The best possible reward for malicious behaviour by application $i$ is given by $\tilde c_{\tau,i}$. Both these rewards can be dependant on the state of the Android operating system in iteration $\tau$ of the Intention Game, given the security measures deployed by the operating system in the said iteration. In the private payoffs for malicious behaviour, the coefficient for $c_{\tau,i}$ is determined by the number of applications susceptible to that exposure \cite{app-android}. $d_i(\mathbf{a})$ is the cost of deployment of legitimate/malicious behaviour by application $i$ given the permissions sought by all applications for legitimate/malicious functionality. Further, we assume $\forall i \in [n], \tilde c_{\tau,i} \geq 100 \times c_{\tau,i}$, to ensure that the malicious action payoff is greater than the legitimate action payoff in general. The Intention $\xi_i$ of a mobile application $i$ reflects its choice between legitimate and malicious functionality. The members of $\mathcal{G}_{app} := ([n], \{ A_i \}_{i \in [n]}, \{ (u_i,v_i) \}_{i \in [n]})$ are given below:
\vspace*{-3pt}
\begin{enumerate}
\item $A_i := \{ \textsc{userexp}, \textsc{btheft}, \textsc{ahijack}, \textsc{shijack}, \textsc{binject}, \textsc{b-nocheck}, \textsc{alaunch}, \textsc{slaunch} \}$
\item $\forall i \in [n], \forall \mathbf{a} \in \mathbf{A}$:
\[
  u_i(\mathbf{a}) =
  \begin{cases}
                                   0 & \text{if $a_i \neq \textsc{userexp}$} \\
                                   c_{\tau,i} - d_i(\mathbf{a}) & \text{if $a_i = \textsc{userexp}$} \\
  \end{cases}
\]
\item $\forall i \in [n], \forall \mathbf{a} \in \mathbf{A}$:
\[
  v_i(\mathbf{a}) =
  \begin{cases}
  				 c_{\tau,i} - d_i(\mathbf{a}) & \text{if $a_i = \textsc{userexp}$} \\
                                   0.56 \times \tilde c_{\tau,i} - d_i(\mathbf{a}) & \text{if $a_i = \textsc{btheft}$} \\
                                   0.03 \times \tilde c_{\tau,i} - d_i(\mathbf{a}) & \text{if $a_i = \textsc{ahijack}$} \\
                                   0.81 \times \tilde c_{\tau,i} - d_i(\mathbf{a}) & \text{if $a_i = \textsc{shijack}$} \\
                                   0.44 \times \tilde c_{\tau,i} - d_i(\mathbf{a}) & \text{if $a_i = \textsc{binject}$} \\
                                   0.87 \times \tilde c_{\tau,i} - d_i(\mathbf{a}) & \text{if $a_i = \textsc{b-nocheck}$} \\
                                   0.43 \times \tilde c_{\tau,i} - d_i(\mathbf{a}) & \text{if $a_i = \textsc{alaunch}$} \\
				 0.86 \times \tilde c_{\tau,i} - d_i(\mathbf{a}) & \text{if $a_i = \textsc{slaunch}$} \\
  \end{cases}
\]
\end{enumerate}

\noindent It is easy to see that $\forall i \in [n], \forall \mathbf{a} \in \mathbf{A}, v_i(\mathbf{a}) \geq u_i(\mathbf{a})$, and $v_i(\cdot)$ can change in each iteration of the Intention Game depending on the operating system state. Further, given a game evolution, if too many attacks are seen as per strategy of some player(s), players can conclude that the players corresponding to the illegitimate actions are playing best responses under their self reflection of the Intention Game, and can employ the participational equilibria (Section \ref{subsec:eqm}) to determine parameters of coexistence: in this instance, deciding whether a set of competing applications should be simultaneously serving the user given that some of them are malicious. \\

\subsection{A Intention Game between ISPs sourcing data from CDNs}
\label{subsec:cdnisp-ig}

\noindent Internet customers avail the Internet as a service through Internet Service Providers (ISPs), with a prominent class of traffic being HTTP objects \cite{networks}. In turn, the ISPs source HTTP objects from public Content Delivery Networks (CDNs) as a commodity, and in the event of a network congestion, can have spontaneous contracts with CDNs for end-user to server assignment and in-network server allocation to reduce the HTTP object download time \cite{cdn-isp-coll} (which ensure a certain quality of service even in cases of a network congestion), for overall better payoffs by serving their customers even in a situation of low availability. This results in a competitive game where different ISPs can compete for data consumption from a common set of CDNs, given a possibly uncertain underlying network state. \\ 

\noindent We state this game formally. We have a set of ISPs $[n]$ in need of HTTP objects of quantity $q$, per CDN. We are given a public CDN community (set) \textbf{C} and possibility hidden reactive contracts of each ISP with the CDN community \textbf{H} (the choice of which gives ISP $i$'s Intention $\xi_i$). We have a network $\mathcal{N}$ interlinking the CDNs and ISPs, which is prone to bottlenecks. The consequent CDN-ISP $k$-Intention Game is $\mathcal{G}_{isp} := ([n], \{ A_i \}_{i \in [n]}, \{ (u_i,v_i) \}_{i \in [n]})$, where: \\
\vspace*{-17pt}
\begin{enumerate}
\item $A_i := \{ a_i = (q_1, q_2, ..., q_{|\textbf{C}|})^T : \forall j \in [|\textbf{C}|], q_j \in \mathbb{Z}_+ \}$, \hspace*{3pt} $\forall i \in [n]$.
\item $\forall i \in [n], \forall \mathbf{a} \in \mathbf{A}, u_i(\mathbf{a}) := a_i^TP_i(a_i) - C_\mathcal{N}(\mathbf{a})$.
\item $H_i:$ If $\mathcal{N}$ is under a severe congestion from ISP $i$'s perspective, allow $i$ to source the same amount of HTTP objects from the CDNs at fraction $\rho_\tau \in (0,1)$ of the cost via end-user to server assignment and in-network server allocation, under a contract $\mathbf{H}$ adopted in reaction to the network congestion, to bring down the average cost per object. \\
$\forall i \in [n], \forall \mathbf{a} \in \mathbf{A}, v_i(\mathbf{a}) := a_i^TP_i(a_i) - C_{\mathbf{H}}(\mathbf{a})$ under contract $\mathbf{H}$, with $C_{\mathbf{H}}(\mathbf{a}) = \rho_\tau \times C_\mathcal{N}(\mathbf{a})$.
\end{enumerate}

\noindent We state that this is a skeletal construction, with uninstantiated price $P_i$ which depends on each ISP $i$'s serving it's customers alone and cost $C_\mathcal{N}$ which depends on the collective traffic of all ISPs dealing with $\textbf{C}$, as a function of the network state/availability. Further, the cost fraction $\rho_\tau$ can be a function of the iteration of the Intention Game, say for example $\rho_\tau := (\frac{3}{4})^\tau$, implying the private payoff can change per iteration of the Intention Game. This highlights the disparity between the private and public payoffs, under a dynamic network state, leading to different average cost for HTTP objects. It is this spontaneity in overall private payoff determination that is intrinsic to an Intention Game.

\subsection{A Bitcoin Mining Intention Game}
\label{subsec:btc-game}

\noindent Bitcoin is a blockchain protocol that uses a solution to a cryptographic hash puzzle to achieve block consensus among a set of nodes/players in a network \cite{mining-game}. The process of finding a solution to the puzzle is called mining, and the probability of success of any miner/player is proportionate to its compute (hash) power. Given the possibility of forks in the underlying blockchain, legitimate mining strategy dictates to continue mining on the longest existing branch in a fork, at any time step. However, there is a possibility of some subset of miners to collude and chose a shorter branch in a fork and invest their collective compute power there for mining rewards. This illegitimate strategy is called selfish mining. \\
\noindent We define a $k$-Intention Game $\mathcal{G}_{btc}$ between miners where there is a possibility of $k$ miners defecting towards selfish mining for better reward as opposed to legitimate mining (called \textsc{frontier} strategy \cite{mining-game}). We will consider the Intention Game is in iteration $\tau$, given the last confirmed block as $B^\tau$. We will denote the hash power of player $i$ by $c_i$, and approximate the hash power of upto $k \hspace*{3pt} (> 1)$ selfish mining colluders $p^+ \subseteq [n], |p^+| \le k$ after $B^\tau$ is confirmed as $\tilde c_{\tau,p^+} \hspace*{3pt} ( \geq c_i, \forall i \in [n])$. We will denote the set of forks originating from $B^\tau$ as $\Lambda^\tau$. We will denote no mining by $\textsc{none}$, and honest and dishonest mining strategies by $\textsc{frontier}$ and $\textsc{selfish}$ respectively. The Intention $\xi_i$ of miner $i$ is a choice between legitimate and selfish mining. We detail the game $\mathcal{G}_{btc} := ([n], \{A_i\}_{i \in [n]}, \{(u_i,v_i)\}_{i \in [n]})$ next:
\vspace*{-3pt}
\begin{enumerate}
\item $A_i = \{ (B^\tau, \lambda, \textsc{mine-type}): \tau \in \mathbb{Z}_+, \lambda \in \Lambda^\tau, \textsc{mine-type} \in \{ \textsc{none}, \textsc{frontier}, \textsc{selfish} \} \}$
\item $\forall i \in [n], \forall \mathbf{a} \in \mathbf{A}$:
\[
  u_i(\mathbf{a}) =
  \begin{cases}
                                   0 & \text{if $a_i = (B^\tau, \lambda, \textsc{mine-type})$, $\textsc{mine-type} \neq \textsc{frontier}$} \\
                                   c_i & \text{if $a_i = (B^\tau, \lambda, \textsc{frontier})$, $\lambda$ is the longest fork} \\
  \end{cases}
\]
\item $\forall i \in [n], \forall \mathbf{a} \in \mathbf{A}$:
\[
  v_i(\mathbf{a}) =
  \begin{cases}
                                   0 & \text{if $a_i = (B^\tau, \lambda, \textsc{mine-type})$, $\textsc{mine-type} = \textsc{none}$} \\
                                   c_i & \text{if $a_i = (B^\tau, \lambda, \textsc{frontier})$, $\lambda$ is the longest fork} \\
                                   \tilde c_{\tau,p^+} & \text{if $\exists p^+ \subseteq [n], |p^+| \le k, \forall j \in p^+ \hspace*{5pt} a_j = (B^\tau, \lambda, \textsc{selfish})$,} \\
                                   & \text{$i \in p^+$ and $\lambda$ is not the longest fork.} \\
  \end{cases}
\]
\end{enumerate}

\noindent As is seen in the definition of $\mathcal{G}_{btc}$, the private payoff on selfish mining by a mining pool $p^+$ of players is higher than what each player can achieve individually through honest mining. Further, this private payoff $\tilde c_{\tau,p^+}$ can change per iteration of the game, depending on what transactions relevant to the selfish miners $p^+$ go on-chain after $B^\tau$, as the miners can be selective/preferential on the transactions they put (for faster confirmation) in the blocks on the fork $\lambda$ chosen for selfish mining.

\subsection{An Intention Game Use to discover Participants in a Secret Sharing Protocol}
\label{subsec:key-game}
We give a $1$-Intention Game $\mathcal{G}_{ss}$ where given $n$ players, there are $m (\leq n)$-honest but unknown players that intend to share a secret, as part of a secret sharing protocol \cite{pvss}. In the game, the dealer \textbf{D} visits the players in some order and give a secret fragment. Honest players accept the fragment and defect from the public image of the Intention Game under a secret payoff from the dealer \textbf{D}, but dishonest players reject the fragment and conform to the public image of the Intention Game. Although it may appear counterintuitive that an honest player accepts a hidden payoff on a visit from a dealer (which reflects its Intention $\xi$), it should be noted that this Intention Game is only defined to achieve a cryptographic goal of secret sharing among an uncertain (in terms of participation) pool of players. \\

\noindent We formally state the Intention Game for Participant Discovery in a Secret Sharing Protocol $\mathcal{G}_{ss} := ([n], \{A_i\}_{i \in [n]}, \{(u_i,v_i)\}_{i \in [n]})$. We assume there is a hidden dealer \textbf{D} that visits all players. The dealer chooses an arbitrary permutation of $[n]$ to visit one player in each iteration of $\mathcal{G}_{ss}$. Each player has three possible actions: either it is unvisited (denoted by \textsc{wait}), or it is visited and rejects the dealer's proposed secret fragment (denoted by \textsc{recv-rej}), or  it is visited and accepts the dealer's proposed secret fragment (denoted by \textsc{recv-acc}). The strategies, contract and payoffs are formally given by: \\
\vspace*{-3pt}
\begin{enumerate}
\item $A_i := \{ \textsc{wait}, \textsc{recv-rej}, \textsc{recv-acc} \}$
\item $\forall i \in [n], \forall \mathbf{a} \in \mathbf{A}$:
\[
  u_i(\mathbf{a}) =
  \begin{cases}
                                   0 & \text{if $a_i = \textsc{recv-acc}$} \\
                                   1 & \text{if $a_i \in \{ \textsc{wait}, \textsc{recv-rej} \}$} \\
  \end{cases}
\]
\item $\forall i \in [n], \forall \mathbf{a} \in \mathbf{A}$:
\[
  v_i(\mathbf{a}) =
  \begin{cases}
                                   1 & \text{if $a_i \in \{ \textsc{wait}, \textsc{recv-rej} \}$} \\
                                   2 & \text{if $a_i = \textsc{recv-acc}$} \\
  \end{cases}
\]
when dealer $\mathbf{D}$ visits player $i$.
\end{enumerate}

\noindent Finally, for a requirement that $m (\leq n)$ participant players are discovered for the secret sharing, the Intention Game $\mathcal{G}_{ss}$ can run for the smallest $\tau$ till a $(\tau, m)$-Honesty Equilibrium is achieved. In the worst case, $\tau = n$.

\section{Selfless Games: Cases of Private Payoff being less than the Public Payoff}
\label{sec:selfless}

Now, we give a dual framework $\tilde{\mathcal{G}}$ of an Intention Game $\mathcal{G}$. A `Selfless Game' $\tilde{\mathcal{G}}$ is a dual of $\mathcal{G}$ in the sense that for all iterations of $\tilde{\mathcal{G}}$, the private payoff is less than the public payoff: $\forall i \in [n], \forall \mathbf{a} \in \mathbf{A}, v_i(\mathbf{a}) \leq u_i(\mathbf{a})$, but for all iterations, there exists a private (to $i$) increment $h_i(\mathbf{a})$ \emph{promised} to be revealed in a future iteration, such that $\forall i \in [n], \forall \mathbf{a} \in \mathbf{A}, v_i(\mathbf{a}) + h_i(\mathbf{a}) \geq u_i(\mathbf{a})$. \\

\noindent Note that this definition of a game where best responses don't maximize payoffs would seem counterintuitive but still is relevant: such incentivizations are possible and rooted in behavioural psychology. For instance, consider a game between a mother and a son in which best responses coming from the mother are suboptimal under  $\mathbf{Im}(\tilde{\mathcal{G}})$. However, the `action profile certificates' might incentivize the mother to get an alternative reward from the father in future, compensating the sub optimality achieved in the present iteration of $\tilde{\mathcal{G}}$. Thus although the mother does not win against the son, she ultimately wins due to a `promise' with the father. \\

\noindent Having suggested a possible dual of Intention Games, we conclude that a rigorous motivation and definition of Selfless Games is not in the scope of our current work. \\

\section{Related Work}
\label{sec:related}

Classical game theory considers extensively models of competition in the form of games with incomplete information \cite{gt}. Bayesian games (Chapter 9, \cite{gt}) allow players to have incomplete information about the payoffs of other players, but have beliefs about those payoffs through some probability distribution \cite{bgi}. Bayesian rationality dictates that players choose to maximize their expected utility \cite{epis-book}. The said Bayesian rationality also governs decision making in epistemic game theory \cite{epis-stanford}, but rational decisions by a player in this framework are evaluated on the basis of what a player's preferences are, and beliefs on what their competitors are going to do \cite{epis-stanford}. These beliefs can be characterized using Harsanyi's type spaces, or probabilistic type spaces, among others \cite{epis-econ}. Stochastic games \cite{stochastic} are extensive form games where the game transitions taken by players are random variables: the next game is a probabilistic function of the current game and current joint action / strategy profile. In all these cases, the game theoretic model is of incomplete information, but can be modelled as a distribution on the underlying game, or some distribution on the uncertainty of the choices of the players' opponents. However, Intention Games do not permit any freedom to model / analyze structure of the private payoffs, which can change arbitrarily per iteration of the game, and consequently, for any player, there is a uniform distribution on the payoffs chosen by other players in the worst case, rendering the player blind on the underlying true game. \\

\noindent There do exist studies on learning in games \cite{l-gt}. Again, methods such as fictitious play are inapplicable since the private payoff can change \emph{per iteration} of the game. Also no player can learn the exact private payoff of any other player, only the lower bound on the payoff defection of each player through the public information given by all players during the game. Principles from contract theory \cite{contract} are rendered inapplicable due to the spontaneity and temporality of Intention Game agreements. Furthermore, the information asymmetry \cite{info-asymm} model in contract theory assumes at least one party has complete information, which is untrue for $k (> 1)$ Intention Games (Definition \ref{def:k-ig}). Finally, as opposed to mixed oligopolies \cite{mixed-olig}, where players/firms are hidden, Intention Games permit true payoffs to be hidden with all players public. \\

\noindent Traditional game theory considers rational choices and utility maximization as a norm. However, there have been models for defection from rational behaviour owing to beliefs, social issues, group issues under the formalism of behavioural game theory \cite{bgt}. Even so, the defections considered in behavioural game theory occur from eccentricities implicit in the participating players due to external effects. In comparison, we do not compromise on the notion of rationality in defining Intention Games. While considering players to be rational, but partially honest and blind, we propose our competitive framework. \\

\noindent In Bayesian Games with Intentions \cite{bgi}, Bayesian Games are extended by defining Intentions as a mapping from types to strategies, to address/represent the complete space of preferences as a consequence of beliefs, which are not addressable through traditional Bayesian games. There have been works \cite{i-and-gt}, to draw connections between the informal philosophical connotation of intentions, and the formal mathematical construction of game theory. There have been qualitative studies (without probabilistic analysis and utilities) of modelling rationality and intentions in a rational choice framework \cite{decision}, which includes multi-agent strategic interactions as in decision theory and game theory. Certain contributions have defined a player's intentions as actions taken by the player as function of the other players' actions, given a repeated game \cite{intent-recog}. They conclude that various equilibria are achieved (including Nash and Stackelberg) when all players' intentions are optimised. However, none of the existing `intentions' within game theory define an a new paradigm of strategic interaction given high uncertainty in the knowledge of the true game, as is proposed by our Intention Games framework. \\

\section{Conclusions and Future Work}
\label{sec:conclude}
In this contribution, we have proposed a new line of enquiry for partially honest and blind interactions (in the knowledge of the true game) among competitive players. Our novel game theoretic construction called Intention Games is an extension of conventional strategic form games, and is more general than Bayesian games, allowing simultaneous dishonesty among players and consequently permitting high uncertainty in the underlying true game. We have defined new notions of equilibria in our novel game setting, to determine parameters for participation and coexistence among potentially dishonest players. We have given an example of an Intention Cournot Duopoly to demonstrate the implications of publicly devious behaviour on payoff outcomes. We have given three constructions of computational Intention Games between mobile applications, Internet service providers, and Bitcoin miners, respectively. We have also demonstrated how Intention Games can be used for the task of distributing shares of a cryptographic key, among a pool of uncertain protocol participants. Finally, we have outlined a dual framework of Intention Games, which we call Selfless Games. \\

\noindent In future, we would like to explore participational equilibria for $k$-Intention Games with $k > 1$, and define an equilibrium concept to measure defection centrality.  We would like to do a rigorous formal theoretical investigation of the computation complexity of the existing (and any consequent) definitions of participation equilibria. We would like to formally define social welfare as a function of the participational equilibria. We would also like to explore a rigorous motivation and definition of Selfless Games. Finally, we would like to build cryptographic primitives with underlying hardness coming from an Intention Game framework. \\

\appendix
\newpage
\section{An Example of a Two Player Intention Game}
\label{app:toy}

We give a 2-player example of a general Intention Game wherein we demonstrate, given the choices of players to conform to or defect from the public image of the Intention Game, what is the game perceived by each player, and which game is actually played. \\

\begin{figure}
\label{fig:toygame}
\centering
\includegraphics[width=1.0\textwidth, height=0.48\textheight]{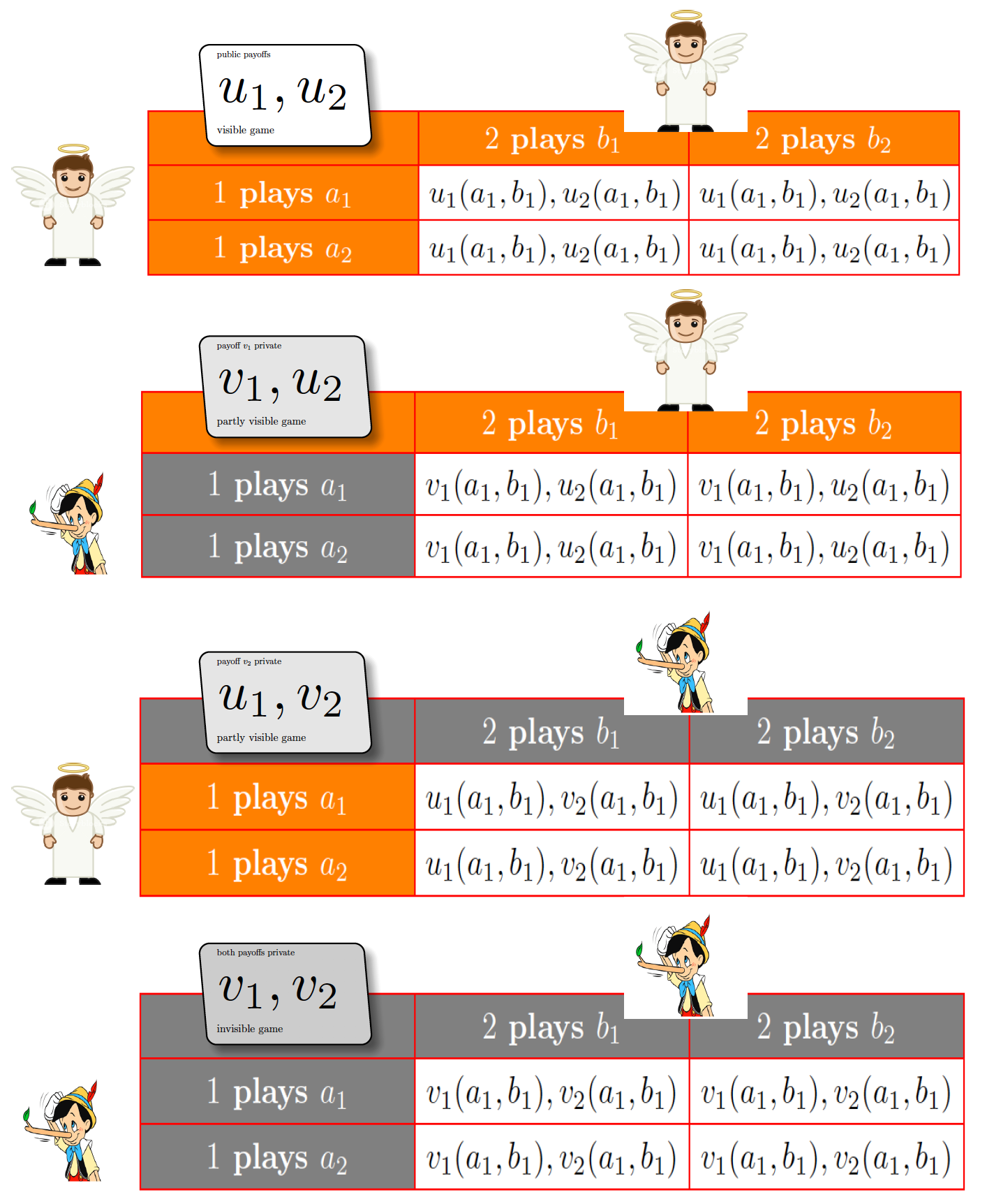}
\caption{A toy Intention Game with two players and two actions per player. A single Intention Game can have four evolutions per iteration.}
\end{figure}

\noindent Referring to Figure $2$, consider the 2-player Intention Game $\mathcal{G} = ([2], \{ A_1 = \{ a_1,a_2 \},  B_2 = \{ b_1,b_2 \} \}, \{(u_i,v_i)\}_{i \in [2]}, \{ \xi_i \}_{i \in [2]})$. In the first case, both players are honest, and agree on the public image of the Intention Game $\mathbf{Im}(\mathcal{G}) = ([2], \{ A_1 = \{ a_1,a_2 \},  B_2 = \{ b_1,b_2 \} \}, \{u_i\}_{i \in [2]})$ (denoted by the two angels), and consequently the perceived game by both players and the true game is $\mathbf{Im}(\mathcal{G})$. In the second case, player-1 defects (denoted by the Pinnochio) and player-2 is honest (denoted by the angel). The true game and the game perceived by player-1 is $\mathbf{Ref}_1(\mathcal{G}) = ([2], \{ A_1 = \{ a_1,a_2 \},  B_2 = \{ b_1,b_2 \} \}, \{ v_1, u_2 \})$, but player-2 believes the game is $\mathbf{Im}(\mathcal{G})$. The third case is the exact opposite of the second case, where the true game and the game perceived by player 2 is $\mathbf{Ref}_2(\mathcal{G}) = ([2], \{ A_1 = \{ a_1,a_2 \},  B_2 = \{ b_1,b_2 \} \}, \{ u_1, v_2 \})$ and the game perceived by player 1 is $\mathbf{Im}(\mathcal{G})$. The final case is the one where neither of the two players conceives the true game: player 1 believes the game is $\mathbf{Ref}_1(\mathcal{G})$, player 2 believes the game is $\mathbf{Ref}_2(\mathcal{G})$, but the true game is $([2], \{ A_1 = \{ a_1,a_2 \},  B_2 = \{ b_1,b_2 \} \}, \{v_i\}_{i \in [2]})$.


\newpage

\end{document}